\documentclass[a4paper,11pt]{article}
\usepackage{amsthm,fullpage}

\newtheorem{theorem}{Theorem}
\newtheorem{lemma}[theorem]{Lemma}

\newcommand{\BWT}{\ensuremath{\mathrm{BWT}}}
\newcommand{\SA}{\ensuremath{\mathrm{SA}}}

\begin{document}

\title{Tunnelling with a reference}
\author{Travis Gagie}
\date{\today}
\maketitle

\begin{abstract}
\noindent
We show how to modify Baier's tunnelling for the case in which we have a non-repetitive and relatively short reference for a highly repetitive and much longer text.  Given a pattern that mostly matches only one substring of the reference, we can use fairly small auxiliary data structures to speed up backward searching with an r-index for the whole text.
\end{abstract}

\section{Introduction}
\label{sec:introduction}

Perhaps the most surprising thing about Gagie, Navarro and Prezza's~\cite{GNP20} r-index was that it supported locating in $O (\log \log n)$ time, whereas the fastest locating time with a standard FM-index~\cite{FM05} or compressed suffix array (CSA)~\cite{GV05,Sad03} is $O (\log_\sigma^\epsilon n)$.  In other words, we can achieve not only better compression when working with highly repetitive datasets, but also better query times.  Their result was quickly followed by Nishimoto and Tabei's~\cite{NT21} proof that r-indexes can use table lookup instead of bitvectors, resulting in better locality and even faster queries (see also, e.g.,~\cite{ZBGL26}).  Bal\'a\v{z} et al.~\cite{BGGHNPS24} introduced tag arrays (see also~\cite{EPS26,OO26}) to allow r-indexes to quickly find all the columns in a multiple alignment where occurrences of a pattern start, rather than listing all the starting positions of those occurrences.  Finally, Depuydt et al.~\cite{DGLMP23} introduced suffixient sets (see also, e.g.,~\cite{BCGGKMP26}) to replace backward stepping in the r-index by sequential scanning in the text.  None of these ideas make sense for non-repetitive texts.

Suffixient sets are just the latest way to gain an advantage by switching back and forth between the order of characters as they appear in a text, and the order of the suffixes that follow them.  CSAs use this switching more than textbook FM-indexes, which use it only during locating to find characters' text positions from their positions in the Burrows-Wheeler Transform (BWT), but practical implementations of FM-indexes often switch earlier once they have found there are at most a few occurrences of a pattern.  As an illustration of this idea, we include the following lemma.  It is not meant to be competitive with the state of the art and could easily be improved with more sophisticated data structures, such as Ferragina and Venturini's~\cite{FV07} data structure for supporting fast access to a compressed string, but we hope it gives the flavour of how switching between text order and BWT order can be advantageous.

\begin{lemma}
\label{lem:non-repetitive}
Given a text $T [1..n]$ over an alphabet of size $\sigma \leq n$ and a constant $\epsilon > 0$, we can store $T$ in $O (n \log \sigma)$ bits such that, when given a pattern $P [1..m]$ packed into $O \left( \frac{m \log \sigma}{\log n} \right)$ words with a suffix $P [m - s + 1..m]$ that occurs only once in $T$, we can
\begin{itemize}
\item find the length of the longest suffix of $P$ that occurs in $T$ (which may be $m$),
\item find the position in $T$ of the unique occurrence of that suffix of $P$,
\item find the lexicographic rank of the suffix of $T$ starting with that occurrence among all the suffixes of $T$,
\end{itemize}
all in $O \left( s \log \sigma + \frac{(m - s) \log \sigma}{\log n} + \log^{1 + \epsilon} n \right)$ time.  We need not be given $s$.
\end{lemma}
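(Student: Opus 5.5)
We store an FM-index for $T$ with a wavelet tree over its BWT, which takes $O(n \log \sigma)$ bits and supports LF-steps in $O(\log \sigma)$ time. We also store $T$ itself in packed form, $O(n \log \sigma)$ bits. Finally we store a sampled suffix array and inverse suffix array with sampling rate $\log^{\epsilon} n \cdot \log_\sigma n$, or more simply rate $\log^{\epsilon} n / \log \sigma$ adjusted so the samples fit in $O(n \log \sigma)$ bits. Then locating a BWT position, or computing the rank of a text position, costs $O(\log^{1+\epsilon} n)$ time by LF-stepping (or $\Psi$-stepping) to the nearest sample.

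The query runs in two phases, which lets us avoid needing $s$ in advance. In the first phase we do ordinary backward search on $P$ from its right end. After each step we check whether the current interval has width 1. Each step costs $O(\log \sigma)$. Once the interval is a single BWT position, we have matched a suffix $P[m-k+1..m]$ with a unique occurrence, and $k \le s$ because $P[m-s+1..m]$ is unique and uniqueness is preserved under extension. If instead the interval becomes empty before that, we stop: the longest matching suffix is found, and its interval has size at least 2. That case cannot arise before step $s$, so we report the last nonempty interval's length.

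In the second phase we have a unique BWT position $j$ for $P[m-k+1..m]$. We locate it in $O(\log^{1+\epsilon} n)$ time, obtaining its text position $p$ with $T[p..p+k-1] = P[m-k+1..m]$. Because the occurrence is unique, every longer matching suffix $P[m-\ell+1..m]$ that occurs in $T$ must occur exactly at $p-\ell+k$. So we extend leftward by comparing $P[1..m-k]$ reversed against $T[1..p-1]$ reversed, scanning backwards word by word. Each comparison of $\Theta(\log n)$ bits, that is $\Theta(\log_\sigma n)$ characters, takes $O(1)$ time using packed words and an XOR followed by a most-significant-bit computation. This finds the longest common extension in $O((m-k)\log\sigma/\log n + 1)$ time and gives the length $\ell$ and the position $p-(\ell-k)$.

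Finally, the rank of the suffix starting at $p-(\ell-k)$ comes from the inverse suffix array sample via $\Psi$/LF stepping in $O(\log^{1+\epsilon} n)$ time. Alternatively, we can LF-step from $j$ exactly $\ell-k$ times, but that would cost too much. Summing the phases gives $O(k\log\sigma + (m-k)\log\sigma/\log n + \log^{1+\epsilon} n)$. Since $k \le s$ and $\log\sigma/\log n \le \log \sigma$, this is within the claimed bound.

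The main obstacle is the word-aligned backward comparison. $P$ and $T$ are packed at different offsets, so each comparison needs shifts to realign them. $T$'s packed words must also be read in reverse order, which is fine with $O(1)$ shifting per word.

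The second obstacle is making the sampling space fit $O(n\log\sigma)$ bits while keeping locate and inverse-locate time $O(\log^{1+\epsilon} n)$. Sampling every $\log^{\epsilon} n$ positions costs $O(n\log^{1-\epsilon} n)$ bits, which is too much when $\sigma$ is small. Taking the sampling rate to be $\log n/\log\sigma$ gives $O(n\log\sigma)$ bits, and locate then takes $O((\log n/\log\sigma)\cdot\log\sigma) = O(\log n)$ time, which is within the budget.
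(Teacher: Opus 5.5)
Your proposal is correct and follows essentially the same route as the paper: backward search until the BWT interval has width one (after at most $s$ steps), a single SA query, a word-packed right-to-left comparison of $P[1..m-k]$ against $T$, and a single ISA query. Two blemishes, both harmless: your parenthetical sampling rate $\log^{\epsilon} n/\log\sigma$ would use too much space, but your final choice of rate $\log_\sigma n$ fixes this (it gives $O(n\log\sigma)$ bits and $O(\log n)$-time locate and inverse-locate); and the ``interval becomes empty'' case cannot occur at all under the hypothesis, since the interval reaches width exactly one by step $s$.
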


\begin{proof}
We store an FM-index for $T$ with $O (\log^{1 + \epsilon} n)$-time access to the suffix array (SA) and inverse suffix array (ISA) of $T$, and a copy of $T$ packed into $O \left( \frac{n \log \sigma}{\log n} \right)$ words.  This takes a total of $O (n \log \sigma)$ bits.

When given $P$, we backward search for it until the BWT interval shrinks to a single character, after $b \leq s$ backward steps and $O (b \log \sigma)$ time.  We then perform an $O (\log^{1 + \epsilon} n)$-time SA query to find the position $i$ in $T$ of the unique occurrence of $P [m - b + 1..m]$.  We compare $T [i - m + b..i - 1]$ to $P [1..m - b]$ from right to left in blocks of $O \left( \frac{\log \sigma}{\log n} \right)$ characters until we have determined that they are equal or found the rightmost character where they differ, in $O \left( \frac{(m - b) \log \sigma}{\log n} \right)$ time.  This tells us the length of the longest suffix of $P$ that occurs in $T$ and the position in $T$ of the unique occurrence of that suffix of $P$.  We use an $O (\log^{1 + \epsilon} n)$-time ISA query to find the lexicographic rank of the suffix of $T$ starting with that occurrence among all the suffixes of $T$.

This takes a total of $O \left( b \log \sigma + \frac{(m - b) \log \sigma}{\log n} + \log^{1 + \epsilon} n \right)$ time and, since $b \leq s$ and $\sigma \leq n$, that is $O \left( s \log \sigma + \frac{(m - s) \log \sigma}{\log n} + \log^{1 + \epsilon} n \right)$.
\end{proof}

\noindent
Ayad et al.~\cite{AFGLPPP25} already applied a more sophisticated version of Lemma~\ref{lem:non-repetitive} in their U-index for highly repetitive texts, but in this paper we will consider how Lemma~\ref{lem:non-repetitive} can be combined with Baier's~\cite{Bai18} idea of tunnelling (modified to use a reference and to allow substrings to leave tunnels in the middle) to give a new result for indexing highly repetitive texts.

\section{Data structures}
\label{sec:data_structures}

Suppose we have a reference $R [1..n]$ for a highly repetitive text $T [1..N]$ over an alphabet of size $\sigma$, with $\sigma \leq n \ll N$, such as a reference genome and a collection of genomes from the same species.  Another genome from that species is likely to mostly match $R$, so a pattern $P [1..m]$ drawn uniformly at random (with a low error rate) from that other genome is likely also to mostly match $R$.  We want to use this property to build a structure that speeds up r-indexes like the structure in Lemma~\ref{lem:non-repetitive} speeds up FM-indexes, but whose size is closer to $n$ than $N$.  We start by storing $R$ packed into $O \left( \frac{n \log \sigma}{\log N} \right)$ words.

For each character $R [j]$ in $R$ we consider the interval in the BWT of $T$ such that, for each character $\BWT [i]$ in the interval,
\begin{itemize}
\item if $d$ is the length of the longest common prefix of $T [\SA [i]..N]$ and $R [j..n]$ then the length of the longest common prefix of $T [\SA [i]..N]$ and $R [j'..n]$ is strictly less than $d$ for every $j' \neq j$,
\item $T [\SA [i]..\SA [i] + \ell - 1] = R [j..j + \ell - 1]$ for a length $\ell$ we will discuss later.
\end{itemize}
Such an interval for a character $R [i]$ can be empty but, by the first constraint, the $n$ intervals for the characters in $R$ are disjoint.  We call these intervals tunnel entrances.

We store two sparse $N$-bit bit vectors, with the 1s in the first marking the starting positions of tunnel entrances, and the 1s in the other marking the ending positions of the tunnel entrances.  These take $O (n \log N)$ bits and let us check in $O (\log \log N)$ time whether the BWT interval for a suffix of a pattern is contained within a tunnel entrance.  We also store the value $j$ with the tunnel entrance for $R [j]$, which will avoid the need for SA queries.

For $1 \leq \ell \leq N$, let $1 - p_\ell$ be the probability that a character $\BWT [i]$ in the tunnel entrance for $R [j]$ has
\begin{itemize}
\item $\SA [i] - \ell, j - \ell \geq 1$,
\item $T [\SA [i] - \ell..\SA [i] - 1] = R [j - \ell..j - 1]$,
\end{itemize}
taken over all the characters in all the tunnel entrances.  (Notice that a character in the tunnel entrance for $R [j]$ immediately precedes an occurrence of $R [j]$ but need not be equal to $R [j]$ itself.)  Assuming $R$ is not itself repetitive, $1 - p_\ell$ should be close to 1 and thus $p_\ell$ close to 0 for $\log_\sigma n \ll \ell \ll n$.  For the rest of this paper we consider $\ell$ to be some such value.

Let $L \leq N$ be the total length of the tunnel entrances.  We store a sparse $L$-bit bitvector with 1s marking each character $\BWT [i]$ in a tunnel entrance for a character $R [j]$ of $R$ that does not have
\begin{itemize}
\item $\SA [i] - \ell, j - \ell \geq 1$,
\item $T [\SA [i] - \ell..\SA [i] - 1] = R [j - \ell..j - 1]$.
\end{itemize}
Since exactly a $p_\ell$ fraction of the bitvector consists of 1s, by the definition of $p_\ell$, this bitvector takes $O \left( N p_\ell \log \frac{1}{p_\ell} \right)$
bits.

Finally, for each character $R [j]$ in $R$ we store the endpoints of the interval of the BWT containing the characters $\ell$ characters to the left in $T$ of the BWT characters marked by 0s in the part of the $L$-bit bitvector covering the tunnel entrance for $R [j]$.  We call these intervals tunnel exits, and the tunnel exit for $R [j]$ is a subinterval of the tunnel entrance for $R [j - \ell]$.  Storing the endpoints of all the tunnel exits takes a total of $O (n \log N)$ bits, and will avoid the need for ISA queries.

Considering all these data structures, we use $O \left( n \log N + N p_\ell \log \frac{1}{p_\ell} \right)$ bits.  (This is on top of the $O (r \log N)$ bits we need for an r-index for $T$ that our data structures will speed up, where $r$ is the number of runs in the BWT of $T$.)  In the next section we explain how we use the tunnel entrances and exits to speed up pattern matching.

\section{Pattern matching}
\label{sec:pattern_matching}

Given a pattern $P [1..m]$ packed into $O \left( \frac{m \log \sigma}{\log N} \right)$ words, we perform a standard backward search for $P$ with the r-index for $T$, except that after finding the BWT interval for each suffix $P [i..m]$ of $P$ we check the sparse $N$-bit bitvectors in $O (\log \log N)$ time to see if that interval is contained in the tunnel entrance for some character $R [j]$ in $R$.  If so and $i > \ell$, we compare $R [j - \ell..j - 1]$ to $P [i - \ell..i - 1]$ in $O \left( \frac{\ell \log \sigma}{\log N} \right)$ time.  If they are equal then we use the rank queries on the $L$-bit bitvector to find the subinterval of the tunnel exit for $R [j]$ that is the BWT interval for $P [i - \ell..m]$ in $O (\log \log N)$ time.  Since the tunnel exit for $R [j]$ is a subinterval of the tunnel entrance for $R [j - \ell]$, we can continue like this until we find a mismatch between $P$ and $R$.  At that point we fall back on backward stepping with the r-index (now without checking for tunnel entrances).  After some calculation, this gives us the following theorem.

\begin{theorem}
\label{thm:main}
Suppose we are given a reference $R [1..n]$ for a highly repetitive text $T [1..N]$ over an alphabet of size $\sigma$, with $\sigma \leq n \ll N$, and an integer $\ell$ with $\log_\sigma n \ll \ell \ll n$.  Then we can store
\begin{itemize}
\item an r-index for $T$ in $O (r \log N)$ bits, where $r$ is the number of runs in the BWT of $T$,
\item auxiliary data structures in a total of $O \left( n \log N + N p_\ell \log \frac{1}{p_\ell} \right)$ bits, where $p_\ell$ is a probability depending on $R$, $T$ and $\ell$ but usually close to 0,
\end{itemize}
such that we can speed up pattern matching with that r-index when a long substring of the pattern matches a single long substring of the reference.  Specifically, given a pattern $P [1..m]$ packed into $O \left( \frac{m \log \sigma}{\log N} \right)$ words whose longest and shortest suffixes occurring exactly once each in $R$ are $P [m - s_{\max} + 1..m]$ and $P [m - s_{\min} + 1..m]$, respectively, we can
\begin{itemize}
\item find the length of the longest suffix of $P$ that occurs in $T$ (which may be $m$),
\item find the BWT interval for that suffix of $P$,
\end{itemize}
in total time
\[O \left((m - s_{\max} + \ell) \log \sigma + (s_{\min} + \ell) (\log \sigma + \log \log N) + \frac{m \log \sigma}{\log N} + \frac{m}{\ell} \log \log N \right)\,.\]
\end{theorem}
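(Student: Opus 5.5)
The argument tracks the backward search of Section~\ref{sec:pattern_matching} through three phases and charges each phase to one term of the bound. Throughout, the r-index does a backward step in $O(\log \sigma)$ time (as in Lemma~\ref{lem:non-repetitive}), a tunnel-entrance check with the two sparse $N$-bit bitvectors costs $O(\log \log N)$, and comparing $\ell$ characters of packed $R$ against packed $P$ costs $O\!\left(\frac{\ell \log \sigma}{\log N}\right)$.

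\emph{Phase 1: from $P[m]$ until a tunnel is entered.} We do ordinary backward steps, checking for a tunnel entrance after each one, so each step costs $O(\log \sigma + \log \log N)$. The key claim is that this phase lasts at most $s_{\min} + O(\ell)$ steps. Once the current suffix $P[i..m]$ contains $P[m - s_{\min} + 1..m]$, it occurs at most once in $R$, say starting at $R[j]$. We then argue that, for an $\ell$ in the stated regime, within $O(\ell)$ further steps either the suffix stops occurring in $T$ (and we are done) or its BWT interval lies inside the tunnel entrance for $R[j']$ for the appropriate $j'$. Both conditions in the definition of a tunnel entrance then hold. The first holds because the suffix matches $R[j'..]$ longer than any other position of $R$, since it occurs uniquely in $R$. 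The second holds because the suffix has length at least $\ell$. This phase therefore contributes the $(s_{\min} + \ell)(\log \sigma + \log \log N)$ term.

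\emph{Phase 2: tunnelling.} Each tunnel jump compares $\ell$ characters in $O\!\left(\frac{\ell \log \sigma}{\log N}\right)$ time and does two rank queries on the $L$-bit bitvector in $O(\log \log N)$ time. It maps the current interval to a subinterval of a tunnel exit, which lies inside the next tunnel entrance, so the next jump can follow with no further backward steps. There are at most $m/\ell$ jumps, so this phase costs $O\!\left(\frac{m \log \sigma}{\log N} + \frac{m}{\ell}\log \log N\right)$.

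\emph{Phase 3: after tunnelling stops.} Tunnelling stops for one of three reasons: a mismatch with $R$, a marked (1) position in the current subinterval, or $i \le \ell$. In every case we fall back to plain backward steps without entrance checks, at $O(\log \sigma)$ each, until $P$ is exhausted or the interval becomes empty. That emptiness gives the longest suffix occurring in $T$ together with its BWT interval. I must bound this phase by $m - s_{\max} + \ell$ steps. The argument is that while the processed suffix remains within $P[m - s_{\max} + 1..m]$, it still occurs exactly once in $R$, so tunnelling can break only within the last block of $\ell$ characters before the mismatch with $R$. This takes up to $\ell$ extra steps, because a jump needs $\ell$ matching characters, plus at most $m - s_{\max}$ characters beyond that point. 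Summing the three phases gives the claimed bound.

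The main obstacle is making the entrance argument in Phase~1 rigorous, and likewise the matching argument that a marked position does not end tunnelling early while we are still inside the unique match. Both rest on how $\ell$ and the 0-marked positions interact: positions marked 1 are simply dropped from the exit, so the interval computed by rank queries must be exactly the interval for $P[i - \ell..m]$. To settle this, I would first prove the invariant that, as long as $P[i..m]$ matches $R$ uniquely and the interval is inside an entrance, every occurrence of $P[i - \ell..m]$ in $T$ comes from a 0-marked position in that entrance. Given that invariant, the rank-based mapping is exact, and any marked position causes a fallback only at a cost already charged to Phase~3.
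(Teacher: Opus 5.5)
Your three-phase accounting matches what the paper leaves implicit in ``after some calculation'', so the proposal is correct and takes the paper's approach: checked backward steps until the processed suffix is both unique in $R$ and of length at least $\ell$, then at most $m/\ell$ tunnel jumps, then unchecked fallback once the jumps leave $P[m - s_{\max} + 1..m]$. One correction: in the paper's algorithm a $1$-marked position never stops tunnelling, because the rank queries on the $L$-bit bitvector simply drop it (this is exactly the exactness invariant in your last paragraph), so remove it from your list of reasons for falling back.
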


\noindent
With more care it is possible to use the auxiliary data structures to speed up the search even when the suffix of $P$ we have processed has no matches in $R$, as long as a sufficiently long prefix of that suffix has a single match in $R$.  We leave the details of that more sophisticated technique for the full version of this paper.

\end{document}